\def\BibTeX{{\rm B\kern-.05em{\sc i\kern-.025em b}\kern-.08em
    T\kern-.1667em\lower.7ex\hbox{E}\kern-.125emX}}
\newcommand{\LOGION}{\texttt{LOGION}\xspace}
\newcommand{\myrename}{\textsf{Rename}}
\newcommand{\myinsert}{\textsf{Insert}}
\newcommand{\mydelete}{\textsf{Delete}}
\renewcommand{\phi}{\varphi}
\newcommand{\pltl}{LTL~}
\newcommand{\LtlP}{\mathbb{P}}
\newcommand{\Ltlmodels}{\models}
\newcommand{\LtlM}{M}
\newcommand{\X}{\bigcirc}
\newcommand{\nexttime}{\X}
\newcommand{\futrue}{\Diamond}
\newcommand{\future}{\Diamond}
\newcommand{\globally}{\square}
\newtheorem{definition}{Definition}
\newtheorem{example}{Example}
\newtheorem{property}{Property}
\newtheorem{theorem}{Theorem}
\newcommand{\li}{\mathrm{li}}
\newcommand{\nt}{\mathrm{nt}}
\newcommand{\until}{\mathcal{U}}
\newcommand{\release}{\mathcal{R}}
\newcommand{\weakuntil}{\mathcal{W}}
\newcommand{\ie}{{\it i.e.}}
\newcommand{\wrt}{{\it w.r.t.}~}
\begin{document}

\title{
    Structural Similarity of Boundary Conditions and an Efficient Local Search Algorithm for Goal Conflict Identification
}

%\author{Submission 3}

 \author{\IEEEauthorblockN{1\textsuperscript{st} Hongzhen Zhong}
 \IEEEauthorblockA{\textit{School of Data and Computer Science} \\
 \textit{Sun Yat-sen University}\\
 Guangzhou, China \\
 zhonghzh5@mail2.sysu.edu.cn}
 \and
 \IEEEauthorblockN{2\textsuperscript{nd} Hai Wan\IEEEauthorrefmark{2}
 \thanks{\IEEEauthorrefmark{2}Corresponding author}}
 \IEEEauthorblockA{\textit{School of Data and Computer Science} \\
 \textit{Sun Yat-sen University}\\
 Guangzhou, China \\
 wanhai@mail.sysu.edu.cn}
 \and
 \IEEEauthorblockN{3\textsuperscript{rd} Weilin Luo}
 \IEEEauthorblockA{\textit{School of Data and Computer Science} \\
 \textit{Sun Yat-sen University}\\
 Guangzhou, China \\
 luowlin3@mail2.sysu.edu.cn}
 \and
 \IEEEauthorblockN{4\textsuperscript{th} Zhanhao Xiao}
 \IEEEauthorblockA{\textit{School of Data and Computer Science} \\
 \textit{Sun Yat-sen University}\\
 Guangzhou, China \\
 xiaozhh9@mail.sysu.edu.cn}
 \and
 \IEEEauthorblockN{5\textsuperscript{th} Jia Li}
 \IEEEauthorblockA{\textit{School of Data and Computer Science} \\
 \textit{Sun Yat-sen University}\\
 Guangzhou, China \\
 lijia49@mail2.sysu.edu.cn}
 \and
 \IEEEauthorblockN{6\textsuperscript{th} Biqing Fang}
 \IEEEauthorblockA{\textit{School of Data and Computer Science} \\
 \textit{Sun Yat-sen University}\\
 Guangzhou, China \\
 fangbq3@mail3.sysu.edu.cn}
% \and
% \IEEEauthorblockN{7\textsuperscript{th} Shaowei Cai}
% \IEEEauthorblockA{%\textit{Institute of Software} \\
% \textit{Chinese Academy of Sciences}\\
% Beijing, China \\
% caisw@ios.ac.cn}
 }

\maketitle
\begin{abstract}
	In goal-oriented requirements engineering,
	goal conflict identification is of fundamental importance for requirements analysis.
	The task aims to find the feasible situations which make the goals diverge within the domain,
	called \textit{boundary conditions (BCs)}.
	However, the existing approaches for goal conflict identification fail to find sufficient BCs and general BCs which cover more combinations of circumstances.  
	From the BCs found by these existing approaches,
	we have observed an interesting phenomenon that there are some pairs of BCs are similar in formula structure, which occurs frequently in the experimental cases.
	In other words, once a BC is found, a new BC may be discovered quickly by slightly changing the former.
	It inspires us to develop a local search algorithm named \LOGION to find BCs, in which the structural similarity is captured by the neighborhood relation of formulae.
	Based on structural similarity,
	\LOGION can find a lot of BCs in a short time. Moreover, due to the large number of BCs identified, it potentially selects more general BCs from them.
	By taking experiments on a set of cases, we show that \LOGION effectively exploits the structural similarity of BCs.
	We also compare our algorithm against the two state-of-the-art approaches. The experimental results show that \LOGION produces one order of magnitude more BCs than the state-of-the-art approaches and confirm that \LOGION finds out more general BCs thanks to a large number of BCs.
\end{abstract}

\begin{IEEEkeywords}
    Goal Conflicts, Boundary Condition, Local Search, LTL Satisfiability
\end{IEEEkeywords}

\section{Introduction}
Requirements engineering is an essential phase of the software development life cycle \cite{chakraborty2012role}.
It is important to attain the correct software requirements specification.
Many researches have demonstrated the significant advantages that formal and goal-oriented approaches bring to the generation of correct specifications \cite{alrajeh2009learning,degiovanni2014automated,ellen2014detecting}.
In such approaches, \textit{domain properties} and \textit{goals}, represented in the \textit{Linear-time Temporal Logic (LTL)}, can capture requirements naturally \cite{van2009requirements}.
\textit{Goal conflict identification} is an important stage of formal and goal-oriented requirements engineering, which aims to find inconsistencies between goals.
The inconsistencies are captured by \textit{boundary conditions (BCs)} under which goals are unsatisfiable as a whole within the domain properties.
Below we illustrate BCs using a MinePump example \cite{kramer1983conic}.

\begin{example}\label{examp:traincrossing}
    Consider a system to control a pump inside a mine. The main goal of the system is to avoid flood in the mine. The system has two sensors. One detects the high water level ($h$), the other detects methane in the environment ($m$). When the water level is high, the system should turn on the pump ($p$). When there is methane in the environment, the pump should be turned off. Domain property and goals are represented via the following LTL formulae.

\noindent \textbf{Domain Property:}
\begin{enumerate}
	\item \textbf{Name}: PumpEffect \\
	\textbf{Description}: The pump is turned on for two time steps, then in the following one the water level is not high. \\
	\textbf{Formula}:
	$\globally((p \wedge \nexttime p) \rightarrow \nexttime(\nexttime \neg h))$
\end{enumerate}
\noindent \textbf{Goals:}
\begin{enumerate}
	\item \textbf{Name}: NoFlooding \\
	\textbf{Description}: When the water level is high, the system should turn on the pump.\\
	\textbf{Formula}: $\globally(h \rightarrow \nexttime(p))$
	\item \textbf{Name}: NoExplosion \\
	\textbf{Description}: When there is methane in the environment, the pump should be turned off.\\
	\textbf{Formula}: $\globally(m \rightarrow \nexttime(\neg p))$
\end{enumerate}
Two of BCs of this specification are:
\begin{align*}
	BC_1\text{= } & \globally(h \wedge m) \\
	BC_2\text{= } & \future(h \wedge m)
\end{align*}
Intuitively, $BC_1$ captures the situation that high water level and methane always happen, while $BC_2$ captures the situation that high water and methane will happen in the future. Under the $BC_1$ or $BC_2$, two goals are unsatisfiable simultaneously within domain property.
\end{example}

Van Lamsweerde \cite{van2009requirements} pointed out that we must identify as many BCs as possible, which is important to refine requirements.
Identifying more BCs helps us select more general BCs.
Intuitively, the more general BC is of higher quality because it can cover more combinations of circumstances.
Once it is worked out, less general BCs will also be worked out.
In order to find more BCs, 
previous approaches can be mainly categorized into construct‐based approaches and search-based approaches.
Construct‐based approaches focus on generating BCs based on constructing templates \cite{van1998managing} or tableau structure \cite{degiovanni2016goal}.
However, they suffer from scalability issues because the templates and the tableau structure could be generated only in a small requirement specification.
The search-based approach \cite{degiovanni2018genetic} focuses on finding BCs.
It produces lots of LTL formulae and checks whether they are BCs one by one. 
But it searches without guided and wastes a lot of time checking the formulae that are not BCs.

When we fixed our eyes on the BCs computed by the existing approaches to generating BCs \cite{degiovanni2016goal,degiovanni2018genetic},
we found an interesting phenomenon that there exist some pairs of BCs which are similar on the structure.
Recall that $BC_1$ and $BC_2$ are presented in Example \ref{examp:traincrossing}, these two formulae are similar on the structure: they differ from each other in only one place.
Furthermore, $BC_2$ is more general than $BC_1$ because $BC_1$ is just a special case of $BC_2$. Based on structural similarity, we can find more general $BC_2$, not just $BC_1$.
We will give a formal definition of general BC in section \ref{section:Goal-Oriented}.
In consequence, once one BC is found, the other one may be found via simply changing the former. Fortunately, we have observed that this case occurs frequently in the specifications, which allows us to find more BCs quickly and help us to select more general BCs via taking advantage of such a kind of structural similarity.
To formalize the sightly change of formulae, we propose three formula edit operations: \myrename, \myinsert, and \mydelete.
For that, we introduce an efficient local search based algorithm to compute BCs, in which the structural similarity is captured by the neighborhood relation of formulae.
A formula differs from its neighbor on only one or two variables or operators.

In this paper, we evaluate to what extent our approach \LOGION exploits the structural similarity of BCs on a set of classical cases.
The experimental results show that the BCs successively computed by \LOGION are quite similar on the formula structure.

We also compare our approach \LOGION against Tableaux-based  \cite{degiovanni2016goal} and the GA-based \cite{degiovanni2018genetic} approaches.
The experiment results show that our approach \LOGION finds one order of magnitude more BCs than these two approaches in almost all cases.
We also choose more general BCs based on the large number of BCs found by \LOGION.

Our main contributions are summarized as follows.
\begin{enumerate}
  \item From the BCs found by the Tableaux-based \cite{degiovanni2016goal} and the GA-based \cite{degiovanni2018genetic} approaches, we discover an interesting phenomenon that there are some pairs of BCs are similar in formula structure.
  \item We propose an efficient local search algorithm named \LOGION for goal-conflict identification, in which the neighborhood of formula captures the structural similarity of BCs.
  \item Empirical evidence shows that our approach \LOGION has superiority on the efficiency of computing BCs, over the Tableaux-based  \cite{degiovanni2016goal} and the GA-based \cite{degiovanni2018genetic}  approaches.
\end{enumerate}

The remainder of the paper is organized as follows.
In Section \ref{sec:background}, we give preliminaries about goal-conflict identification,
linear-time temporal logic, tree edit distance, and local search algorithms.
Next, we show how the similarity of BCs occurs frequently in Section \ref{sec:similarity}
and describe our approach in detail in Section \ref{sec:LOGION}.
In Section \ref{sec:expriments}, we carry out our
experiments to validate our approach and compare it with related approaches.
Finally, we discuss related work in Section  \ref{sec:relatedwork} and make some conclusions in the last section.

\section{Background}\label{sec:background}
In this section, we introduce the background of goal conflict identification, linear-time temporal logic, tree edit distance, and local search algorithm. We briefly recall some basic notions for the rest of the paper.

\subsection{Goal Conflict Identification}\label{section:Goal-Oriented}
In goal-oriented requirements engineering methodologies~\cite{van2009requirements},
\textit{goals} and \textit{domain properties} are formed in linear-time temporal logic.
Goals are prescriptive statements that the system must achieve, while domain properties are descriptive statements that capture the domain about the problem world.
It is unrealistic to require requirements specification to be complete or all goals to be satisfiable ideally, because unanticipated cases may occur.
It suggests identifying the conflicts as early as possible.
The goal-conflict identification stage is  important in the conflict analysis phase.

The \textit{conflict analysis phase}~\cite{van1998integrating,van2009requirements} has three main stages:
\begin{enumerate}
	\item the \textit{identification stage} is to identify condition whose occurrence makes the goals diverge;
	\item the \textit{assessment stage} is to assess and prioritize the identified conflicts according to their likelihood and severity;
	\item the \textit{resolution stage} is to resolve the identified conflicts by providing appropriate countermeasures.
\end{enumerate}

%This paper focuses on the {\em identification} stage, with the provision of an automated mechanism for goal conflict discovery. A \textit{conflict} essentially represents a condition whose occurrence results in the loss of satisfaction of the goals, i.e., that makes the goals \textit{diverge}~\cite{van1998managing}.

In this paper, we focus on the identification stage. %the identification stage which is the footstone of the conflict analysis phase.
A conflict represents a condition that occurrence results in the loss of satisfaction of the goals.
\begin{definition}
	Given a set $\{G_1, \dots, G_n\}$ of goals and a set $Dom$ of domain properties, the goals are said to be divergent in the context of $\varphi$ if there exists an expression $\varphi$, called a \emph{boundary condition}, such that the following properties hold:
	\begin{flalign*}
	&Dom \wedge G  \wedge \varphi \models \bot    \tag{\text{logical inconsistency}} \\
	&Dom \wedge  G_{-i} \wedge \varphi \not \models \bot \text{, for  each } \! 1 \leq i \! \leq \! n  \tag{\text{minimality}} \\
	& \neg G \not \equiv \varphi \tag{\text{non-triviality}}
	\end{flalign*}
	\emph{where} $G=\bigwedge_{1\leq i \leq n}G_i$ \emph{and} $G_{-i} = \bigwedge_{j\not =i}G_j$.
\end{definition}

Intuitively, a BC captures a situation where the goals as a whole are not satisfiable.
The logical inconsistency condition means that the conjunction of goals $G_1, \dots, G_n$ becomes inconsistent when $BC$ holds.
The minimality condition state that disregarding one of the goals no longer results in the consistency.
%if any of the goals is disregarded, then consistency is recovered.
The non-triviality condition forbids a BC to be a trivial condition, which is the negation of the conjunction of the goals.
Note that the minimality condition requires a BC itself to be consistent.
%%Also due to the minimality condition, it cannot be \false (it has to be consistent with the domain $Dom$).
%%Section \ref{sec:motivation} provides an illustrating example, that further explains the intuition behind BCs.
%%
%%Typically, formula requirements engineering methodologies adopt a logical formalism to precisely capture the desired system goals and domain properties.
%%For instance, the KAOS method~\cite{van2009requirements} uses Linear-Time Logic~\cite{huth2004logic} for formally specifying software requirements.
%In particular, if $BC_1 \models BC_2$, we say that $BC_2$ is more general, or weaker than $BC_1$.
%Intuitively, if $BC_2$ has been resolved by the engineer, then $BC_1$ is resolved too.

Specifying software requirements in the LTL formulation allows us to employ automated LTL satisfiability (SAT) solvers to check for the feasibility of the corresponding requirements.
With an efficient LTL SAT solvers, we can automatically check if the generated candidate formulae are valid BCs or not by checking if they satisfy the properties.

Given a set of BC $S$, we call BC $\varphi_i \in S$ is more general than another BC $\varphi_j \in S$ if $\varphi_j$ implies $\varphi_i$.
Intuitively, a more general BC $\varphi$ captures all the particular combinations of circumstances captured by the less general BCs than $\varphi$. Therefore, it is also important to find more general BCs. 

\subsection{Linear-Time Temporal Logic}
Linear-Time Temporal Logic (LTL) \cite{emerson1990temporal} is widely used to describe
infinite behaviors of discrete systems.
%Linear-Time Temporal Logic (LTL)~\cite{huth2004logic} is a logical formalism that has been extensively employed to state properties of reactive systems, and more recently, for specifying software requirements~\cite{van2009requirements}.
%LTL assumes that the structure of time is linear, i.e., each instance of time is followed by a unique future instant.
\pltl formulae are defined from a countably infinite set $\LtlP$ of propositional variables, classical propositional connectives including $\neg$, $\wedge$ , $\vee$ % , %$\rightarrow$, $\leftrightarrow$
and temporal operators  $\globally$ (always), $\futrue$ (eventually), $\X$ (next), $\until$ (until), $\release$ (release) and $\weakuntil$ (weak-until), as follows:
\begin{enumerate}
    \item $b\in \mathbb{B}$ is an LTL formula, where $\mathbb{B} = \{\top, \bot\}$;
    \item every proposition $p\in \LtlP$ is an LTL formula;
    \item if $\varphi_1$ and $\varphi_2$ are LTL formulae, then so are
    $\neg \varphi_1$, $\varphi_1 \wedge \varphi_2$,
    $\varphi_1 \vee \varphi_2$,
    $\X \varphi_1, \globally\varphi_1, \futrue\varphi_1,$
    $\varphi_1 \until \varphi_2,\varphi_1 \release \varphi_2,\varphi_1 \weakuntil \varphi_2$.
\end{enumerate}
%We consider the usual definition for the operators $\globally$ (always), $\futrue$ (eventually), $\release$ (release) and $\weakuntil$ (weak-until) in terms of $\X$, $\until$ and logical connectives.

LTL formulae are interpreted over linear-time structures.
A linear-time structure is a pair of $\LtlM= (S,\varepsilon)$ where $S$ is an $\omega$-sequence $s_0, s_1,...$ of states
%\ah{Should rather be $\mathbb{N}$: where do we get the linear structure from otherwise??}
and $\varepsilon: S \rightarrow 2^\LtlP$ is a function mapping each state $s_i$ to a set of propositional variables that hold in $s_i$.
Let $\LtlM$ be a linear-time structure, $i \in \mathbb{N}^0$ a position, and $\phi,\psi$ \pltl formulae.
We define the satisfaction relation $\Ltlmodels$ as follows:
\begin{center}\begin{tabular}{lcl}
$\LtlM,i \Ltlmodels p$ &iff& $p \in \varepsilon(s_i) \text{, where } p \in \LtlP $
\\
$\LtlM,i \Ltlmodels \neg \phi$ &iff& $\LtlM,i \not \Ltlmodels \phi$
\\
$\LtlM,i \Ltlmodels \phi \wedge \psi$ &iff& $\LtlM,i \Ltlmodels \phi  \textsf{ and } \LtlM,i \Ltlmodels \psi$
\\
%\hspace{-5mm}&&\LtlM,i \Ltlmodels \sometimes \phi \textsf{ iff } \text{for some } j\geq i,\LtlM,j \Ltlmodels \phi\\
$\LtlM,i \Ltlmodels \nexttime \phi$ &iff& $\LtlM,i+1 \Ltlmodels \phi$
\\
$\LtlM,i \Ltlmodels \phi ~\until~ \psi$ &iff& $\exists k \geq i \text{ s.t. } \LtlM,k \Ltlmodels \psi \text{ and }$\\
&& $\forall i \leq j {<} k, \LtlM,j \Ltlmodels \phi$
\end{tabular}\end{center}

Operator release ($\release$), eventually ($\future$), always ($\globally$), and weak-until ($\weakuntil$) are commonly used, and can be defined as 
$\varphi_1 \release \varphi_2 := \neg(\neg\varphi_1 \until \neg \varphi_2)$, 
$\future \varphi := \top \until \varphi$, 
$\globally \varphi := \neg (\top \until \neg \varphi)$, 
and $\varphi_1 \weakuntil \varphi_2 := \varphi_1 \until (\varphi_2 \vee \globally \varphi_1)$, respectively.

An \pltl formula $\phi$ is satisfiable if there exists a linear-time structure $\LtlM$ such that $\LtlM, 0 \Ltlmodels \phi$, and it is valid if $\LtlM, 0 \Ltlmodels \phi$ for all linear-time structures $\LtlM$.
The LTL satisfiability and validity problems are decidable and PSPACE-complete~\cite{sistla1985complexity}.

An LTL formula $\varphi$ implies an LTL formula $\varphi'$, in notation $\varphi \models \varphi'$, when $\LtlM, 0 \models \varphi$ for every linear-time structure $\LtlM$ such that $\LtlM, 0 \models \varphi'$. Two LTL formulae $\varphi_1$ and $\varphi_2$ are said to be logically  equivalent, denoted by $\varphi_1 \equiv \varphi_2$, if $\varphi_1 \models \varphi_2$ and $\varphi_2 \models \varphi_1$.

An LTL formula $\psi$ is a subformula of an LTL formula $\varphi$ if $\psi$ is a part of $\varphi$.
We use $|\phi|$ to denote the size of the formula $\phi$, \emph{i.e.}, the number of variables and operators in $\phi$.
%The empty LTL formula is denoted by $\theta$.

We refer the reader to~\cite{manna2012temporal} for further details on LTL.

\subsection{Tree Edit Distance}
The tree edit distance~\cite{zhang1989simple} is used to measure the similarity between two ordered labeled trees and has successfully been applied in a wide range of applications.
The \textit{tree edit distance} is the cost of the minimal-cost sequence of node edit operations that transforms one tree into another. For ordered label trees, there are three node edit operations:
\begin{itemize}
    \item \textit{relabel} the label of a node in tree;
    \item \textit{insert} a node between an existing node and a subsequence of consecutive children of this node;
    \item \textit{delete} a non-root node and connect its children to its parent maintaining the order.
\end{itemize}
Generally, the cost of each node edit operation is one, and the cost of a sequence is the sum of the cost of its node edit operations.
So, the tree edit distance is considered as the length of the sequence with the minimal cost.
We use $|T|$ to denote the size of the tree $T$, \emph{i.e.}, the number of nodes in $T$.
Given two trees $T_1$ and $T_2$, the tree edit distance is denoted by $\delta(T_1, T_2)$.

In fact, for the trees with large sizes,
the tree edit distance becomes unsuitable to represent the difference between these trees.
For example, a tree edit distance of $5$ means a big gap between two trees with the size $10$ but a small gap between two trees with the size $1000$.
To capture the relative similarity \wrt the size, the notion of the normalized tree edit distance was proposed in \cite{rico2003comparison}, which is defined as:
$$\Delta(T, T') = \frac{\delta(T,T')}{|T| + |T'|} \in [0, 1].$$

For further details on the tree edit distance, we refer the reader to~\cite{bille2005survey}.

\subsection{Local Search}
Local search is a kind of meta-heuristic search algorithm.
More formally, given a problem instance $ \pi $, we use $S(\pi)$ to denote its search space, which is the set of all candidate solutions. The {neighborhood function} $ N: S(\pi) \mapsto 2^{S(\pi)} $ maps each candidate solution to its neighbors. The set $N(s)$ is called the neighbors of $s$. The {objective function} $ f:S(\pi) \mapsto \mathbb{R} $ is a mapping of candidate solutions to their objective function value.
Typically, a local search algorithm constructs an initial candidate solution and modifies it iteratively.
At each search step, the algorithm evaluates the neighbors of the current candidate solution by the objective function and move to the best neighbor.
The search procedure terminates when it runs out of time or the best solution found has not been improved in a given number of steps.

Due to the limited amount of local information when choosing a neighbor to move to, the local search suffers from {cycling}, that is, some candidate solutions of high quality are being frequently revisited.
\textit{Tabusearch}~\cite{glover1986future,hansen1990algorithms} is a fundamentally different approach to reduce cycling which forbids steps to recently visited candidate solutions. The simplest and most widely applied
implementation of tabu search consists of an iterative improvement algorithm enhanced with a form of short-term memory $ M $ which stores the last $ T $ visited solutions, where $ T $ is called the \textit{tabu tenure}. In each search step, the algorithm chooses the best neighbor in $ N(s) \backslash M $, where $ s $ is the current candidate solution.

For further details on the local search algorithm, we refer the reader to \cite{hoos2004stochastic}.

\section{Similarity of Boundary Conditions}\label{sec:similarity}
As we show in the introduction section,
we have observed that there are some pairs of BCs similar on the structure.
In this section, we show the structural similarity of BCs in the classical cases.

\subsection{Cases}

We use 16 cases introduced by Degiovanni \cite{degiovanni2018genetic}.
Table \ref{tab:casestudy} summarizes the numbers of domain properties (``\#Dom''), goals (``\#Goal''), variables (``\#Var''), and the total size of all formulae (``Size'') for the specification of each case study.
\begin{table}[t]
	\centering
	\caption{The details of each case include the numbers of domain properties (\#Dom), goals (\#Goal), variables (\#Var), and the total size of all formulae (Size) for the specification of each case}
\begin{tabular}{c|c|c|c|c}
	\hline
	Case  & \#Dom & \#Goal & \#Var & Size \bigstrut\\
	\hline
	RetractionPattern1 (RP1) & 0     & 2     & 2     & 9 \bigstrut[t]\\
	RetractionPattern2 (RP2) & 0     & 2     & 4     & 10 \\
	Elevator (Ele) & 1     & 1     & 3     & 10 \\
	TCP   & 0     & 2     & 3     & 14 \\
	AchieveAvoidPattern (AAP) & 1     & 2     & 4     & 15 \\
	MinePump (MP) & 1     & 2     & 3     & 21 \\
	ATM   & 1     & 2     & 3     & 22 \\
	RRCS  & 2     & 2     & 5     & 22 \\
	Telephone (Tel) & 3     & 2     & 4     & 31 \\
	LAS   & 0     & 5     & 7     & 32 \\
	Prioritized Arbiter (PA) & 6     & 1     & 6     & 57 \\
	Round Robin Arbiter (RRA) & 6     & 3     & 4     & 77 \\
	Simple Arbiter (SA) & 5     & 3     & 6     & 84 \\
	Load Balancer (LB) & 3     & 7     & 5     & 85 \\
	LiftController (LC) & 7     & 8     & 6     & 124 \\
	AMBA  & 6     & 21    & 16    & 415 \bigstrut[b]\\
	\hline
\end{tabular}%
	\label{tab:casestudy}
\end{table}%

\subsection{Structural Similarity}
In order to capture the structural similarity of LTL formulae,
we borrow the notion of the similarity on ordered label trees.
It is notable that every LTL formula corresponds to a parse tree which is also an ordered label tree \cite{enderton2001mathematical}.
Then we use $T_\phi$ to denote the parse tree of the formula $\phi$.
Note that the size of the formula $\phi$ equals to the size of its parse tree $T_\phi$, \emph{i.e.}, $|T_\phi|=|\phi|$.
The parse tree of the formula $\globally(h\rightarrow \nexttime (p))$ is shown in Figure \ref{fig:formulaEditingOperations}(a).

Next, we define the distance between two formulae as the tree edit distance between their corresponding parse trees.
Formally, for two formulae $\phi$ and $\phi'$,
we use $\delta(\phi,\phi') = \delta(T_{\phi},T_{\phi'})$ to denote the formula distance between $\phi$ and $\phi'$.
As BCs in different specifications may differ seriously on the formula size,
we also consider the relative distance between two formulae \emph{w.r.t.} their sizes.
Similarly, we use $\Delta(\phi,\phi') = \Delta(T_{\phi},T_{\phi'})$ to denote the normalized formula distance between $\phi$ and $\phi'$.

Intuitively, the distance between two formulae indicates their divergence on the structure:
the distance is bigger, the divergence is bigger.
Indeed, the formula distance also has the property of the tree edit distance:
\begin{itemize}
  \item $\delta(\phi,\phi) = \Delta(\phi,\phi) = 0$
  \item $\Delta(\phi,\phi') \in [0,1]$
\end{itemize}

\begin{example}[Example \ref{examp:traincrossing} cont.]
	Let $\varphi_1 = \globally(h \rightarrow \nexttime(p))$, $\varphi_2 = \globally(h \wedge m)$ and $\varphi_3 = \future(h \wedge m)$.
	For formula distance, $\delta(\varphi_1, \varphi_2)=3$ and $\delta(\varphi_2, \varphi_3) = 1$.
	For normalized formula distance, $\Delta(\varphi_1, \varphi_2)=0.333$ and
	$\Delta(\varphi_2, \varphi_3) = 0.125$.
\end{example}

According to the similarity notion of LTL formulae,
here we have our first research question:

\textbf{RQ1:} How frequently similar BC pairs do occur in these cases?

As there is no set of BCs for these specifications, we generated BCs by applying the existing approaches \cite{degiovanni2016goal, degiovanni2018genetic}
for 24 hours (10 runs in parallel).
After removing the identical BCs, the numbers of BCs of different cases are shown in Table \ref{tab:rq1-1}, denoted by ``\#total BC''.

In order to explore how frequently similar BC pairs occur, for each case study, we compute
the proportion of BCs which have at least a similar BC whose formula distance is not greater than $l$ $(l = 1,2,3)$, on the total number of BCs.
Formally, we use $\%BC(\delta \leq l)$ to denote the proportion, which is defined as:
$$\%BC(\delta \leq l) = \frac{\#\{BC | \exists BC'. \delta(BC,BC')\leq l \}}{\text{\#total BC}}.$$

We also compute the average number of similar BCs of each BC whose distance is not greater than $l$.
Formally, given a set of BCs and a boundary condition $BC$, we use $sim(BC,l)=\{BC' | \delta(BC,BC')\leq l \}$ to denote the set of BCs whose formula distance to $BC$ is not greater than $l$.
We use $\#sim(\delta \leq l)$ to denote the average number of similar BCs, which is defined as:
$$\#sim(\delta \leq l) = \frac{\sum\#sim(BC,l)}{\text{\#total BC}}.$$

The results are shown in Table \ref{tab:rq1-1}.
Taking the example of LAS,
for each boundary condition $BC$, there are averagely $6$ BCs which differ from $BC$ only on one or two variables or operators.
From the table, we can observe that in most cases except for Load Balancer and LiftController,
there are more than $70\%$ BCs which can be obtained by applying two or three formula edit operations from another BC.
In these cases, every BC averagely has more than one BCs with at most variables or operators differing.

Besides the formula distance, we also consider the normalized formula distance in the cases of different upper bound $k$ $(k = 0.1,\ 0.2,\ 0.3)$.
The results are also shown in Table \ref{tab:rq1-1}.
For example, in RRCS, every boundary condition $BC$ averagely has $6.6$ BCs which has a normalized formula distance to $BC$ less than $0.2$.
The results illustrate that
in each case, there are more than $90\%$ BCs having a relatively similar BC whose normalized formula distance is at most $0.3$.
The results also show that
each BC averagely has more than two similar BC whose normalized formula distance is at most $0.2$.

Next, in order to explore how close the similar BC pairs are, for each case study, we compute the average minimum distance between BCs and their most similar BCs.
The results are shown in the column ``avg min'' in Table \ref{tab:rq1-1}.
It shows that in most cases, most BCs have a similar BC whose formula distance is less than $2$.

To sum up, 
in most cases, there are more than $70\%$ BCs which have both an absolutely similar BC $(\delta \leq 3)$
a relatively similar BC $(\Delta \leq 0.3)$. Similar BC pairs are very common in cases.

\begin{table*}[ht]
    \centering
    \caption{Experimental results on the similarity of BCs computed by the GA-based approach and Tableaux-based approach (Tableaux for short)}
    \label{tab:rq1-1}%
\begin{tabular}{c|ccc|ccc|ccc|ccc|c|c}
	\hline
	\multirow{3}[2]{*}{Case} & \multicolumn{14}{c}{GA-based Approach and Tableaux} \bigstrut[t]\\
	& \multicolumn{3}{c|}{$\%BC(\delta \leq l)$} & \multicolumn{3}{c|}{$\#\mathsf{sim}(\delta \leq l)$} & \multicolumn{3}{c|}{$\%BC(\Delta \leq k)$} & \multicolumn{3}{c|}{$\#\mathsf{sim}(\Delta \leq k)$} & \multirow{2}[1]{*}{avg min} & \multirow{2}[1]{*}{\#total BC} \\
	& $l=1$ & $l=2$ & $l=3$ & $l=1$ & $l=2$ & $l=3$ & $k=0.1$ & $k=0.2$ & $k=0.3$ & \!\!\!$k=0.1$ \!\!\!&\!\!\! $k=0.2$ \!\!\!& \!\!\!$k=0.3$\!\!\! &       &  \bigstrut[b]\\
	\hline
	RP1   & 85.3\% & 96.4\% & 98.0\% & 2.1   & 6.8   & 13.4  & 91.9\% & 97.1\% & 98.7\% & 5.1   & 21.0  & 42.3  & 1.1   & 308  \bigstrut[t]\\
	RP2   & 84.2\% & 96.6\% & 98.8\% & 2.1   & 6.8   & 13.5  & 87.6\% & 97.2\% & 99.1\% & 8.1   & 39.4  & 61.0  & 1.1   & 566  \\
	Ele   & 65.9\% & 93.5\% & 97.6\% & 1.2   & 4.2   & 9.3   & 35.8\% & 90.2\% & 97.6\% & 1.0   & 4.8   & 11.6  & 1.2   & 124  \\
	TCP   & 69.3\% & 91.9\% & 98.7\% & 1.5   & 5.0   & 11.3  & 52.8\% & 93.2\% & 98.2\% & 1.7   & 6.9   & 16.2  & 1.1   & 382  \\
	AAP   & 63.2\% & 92.0\% & 97.9\% & 1.4   & 5.0   & 11.7  & 24.7\% & 85.8\% & 95.1\% & 0.5   & 3.4   & 7.3   & 1.2   & 289  \\
	MP    & 68.4\% & 93.4\% & 98.2\% & 1.5   & 5.4   & 12.7  & 54.2\% & 90.8\% & 96.8\% & 1.9   & 10.0  & 22.0  & 1.1   & 913  \\
	ATM   & 72.6\% & 94.6\% & 98.5\% & 1.7   & 5.8   & 12.0  & 77.9\% & 97.2\% & 98.5\% & 3.6   & 17.5  & 37.2  & 1.1   & 681  \\
	RRCS  & 40.8\% & 81.0\% & 90.2\% & 0.7   & 2.3   & 4.3   & 66.5\% & 93.4\% & 98.1\% & 1.6   & 6.6   & 14.7  & 1.5   & 317  \\
	Tel   & 39.2\% & 72.0\% & 82.9\% & 0.7   & 1.9   & 3.7   & 80.5\% & 95.1\% & 97.9\% & 3.1   & 10.5  & 23.9  & 1.7   & 534  \\
	LAS   & 80.4\% & 92.5\% & 99.1\% & 2.0   & 6.0   & 8.9   & 95.3\% & 97.2\% & 97.2\% & 18.0  & 24.4  & 27.4  & 1.1   & 108  \\
	PA    & 58.8\% & 79.1\% & 87.8\% & 1.1   & 3.0   & 5.6   & 74.7\% & 86.9\% & 92.8\% & 2.5   & 7.3   & 15.5  & 1.5   & 321  \\
	RRA   & 86.2\% & 95.6\% & 97.8\% & 2.1   & 7.0   & 15.1  & 92.5\% & 98.1\% & 99.4\% & 4.0   & 16.8  & 36.6  & 1.1   & 319  \\
	SA    & 29.4\% & 64.7\% & 76.5\% & 0.5   & 1.1   & 1.5   & 64.7\% & 82.4\% & 88.2\% & 2.1   & 3.8   & 7.1   & 1.8   & 18  \\
	LB    & 7.1\% & 29.8\% & 41.7\% & 0.2   & 0.5   & 0.7   & 63.1\% & 84.5\% & 91.7\% & 0.8   & 2.1   & 3.1   & 3.6   & 85  \\
	LC    & 0.9\% & 12.4\% & 27.8\% & 0.1   & 0.3   & 0.6   & 36.9\% & 66.5\% & 81.7\% & 0.9   & 3.4   & 6.8   & 4.2   & 443  \\
	AMBA  & 1.8\% & 43.3\% & 72.0\% & 0.1   & 1.1   & 2.0   & 38.4\% & 76.8\% & 91.5\% & 1.1   & 6.9   & 12.5  & 1.9   & 165  \bigstrut[b]\\
	\hline
\end{tabular}%
    \normalsize
    \begin{flushleft}%
        ``$\%BC(\delta \leq l)$'' and ``$\%BC(\Delta \leq k)$'' mean the proportion of BCs which have at least a similar BC whose formula distance is at most $l$ and whose normalized formula distance is at most $k$, respectively.
       ``$\#\mathsf{sim}(\delta \leq l)$'' and ``$\#\mathsf{sim}(\Delta \leq l)$'' mean the average number of similar BCs of each BC whose formula distance are at most $l$ and whose normalized formula distance is at most $k$.
        ``$avg\ min$'' means the average distance between BCs and their most similar ones.
        ``\#total BC'' stands for the total number of different BCs found by the GA-based approach and Tableaux-based approach for 24 hours (10 runs in parallel).
    \end{flushleft}
\end{table*}

\section{The \LOGION Algorithm}\label{sec:LOGION}
In this section, we develop a local search algorithm named \LOGION for goal conflict identification,
which exploits the structural similarity of BCs to design the neighborhood relation.
Next, we specify each component of our algorithm \LOGION and then give its entire description.

\subsection{Neighborhood}
The search space of \LOGION is composed of LTL formulae.
To search for BCs, we define the syntactic neighborhood of an LTL formula.
Inspire by node edit operations, we introduce LTL \textit{formula edit operations}, which can slightly modify an LTL formula to another LTL formula.

Let $o_1, o_1'$ be two different LTL unary operators (\ie, $o_1, o_1' \in \{\neg,\globally,\futrue,\X\}$).
and $o_2, o_2'$ be two different LTL binary operators (\ie,
$o_2, o_2' \in \{\wedge,\vee, \until,\release,\weakuntil \}$).
Given a formula $\psi$, its neighbor $\psi'$ can be constructed by one of three \textit{formula edit operations} defined below:
\begin{enumerate}
    \item \myrename \label{em:rename}
    \begin{enumerate}
%        \item $\psi' = b'$, when $\psi=b$, $b \in \mathbb{B}$, $b'\in\mathbb{B}$, and $b\not=b'$ \label{em:rename1}
        \item $\psi' = p'$, when $\psi=p$, $p \in \LtlP \cup \mathbb{B}$, $p' \in \LtlP \cup \mathbb{B}$ and $p \not = p'$ \label{em:rename2}
        
        \item $\psi' = o_1'\psi_{1}$, when $\psi = o_1\psi_{1}$ \label{em:rename3}
        \item $\psi' = \psi_{1}o_2'\psi_{2}$, when $\psi = \psi_{1}o_2\psi_{2}$ \label{em:rename4}

    \end{enumerate}
    \item \myinsert \label{em:insertion}
    \begin{enumerate}
        \item $\psi' = o_1'\psi$ \label{em:insertion1}
        \item $\psi' = \psi o_2'p$ or $p o_2'\psi$, when $p \in \LtlP \cup \mathbb{B}$ \label{em:insertion2}
    \end{enumerate}
    \item \mydelete \label{em:deletion}
    \begin{enumerate}
        \item $\psi' = \psi_{1}$, when $\psi = o_1\psi_{1}$ \label{em:deletion1}
        \item $\psi' = \psi_{1}$ or $\psi' = \psi_{2}$, when $\psi = \psi_{1}~o_2~\psi_{2}$\label{em:deletion2}
%        \item $\psi' = \psi_{1}$, when $\psi_{1}o_2p$ or $po_2\psi_{1}, p \in \LtlP$ \label{em:deletion2}
    \end{enumerate}
\end{enumerate}
The \myrename~operation replaces the top symbol of $\psi$ with other same type symbol.
The \myinsert~adds a new LTL operator to be top operator of $\psi$. Note that if the new LTL operator is binary operator, it need to add a new proposition $p$ for creating the valid formula $\psi'$.
The \mydelete~removes the top operator of $\psi$. If the top operator of $\psi$ is a binary operator. one of the children is a proposition. Then, the \mydelete~removes the top operator and the proposition.

\begin{theorem}
	For any formulae $\psi_1$ and $\psi_2$, $\psi_1$ can be changed to $\psi_2$ by a limited number of formula edit operations.
\end{theorem}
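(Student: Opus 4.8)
The plan is to prove reachability by combining two facts: that the edit operations are \emph{reversible}, so the reachability relation is symmetric, and that every formula can be reduced to a single fixed propositional variable $p_0$. Together these let me route any transformation through the canonical atom: reduce $\psi_1$ to $p_0$, then run the reduction of $\psi_2$ backwards to rebuild $\psi_2$ from $p_0$. Since each leg uses finitely many operations, the overall sequence is finite, which is exactly the ``limited number'' claimed.

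First I would record that \myinsert and \mydelete are mutual inverses and that \myrename is self-inverse. Concretely, \myinsert case (\ref{em:insertion1}) prepends a unary operator and is undone by \mydelete case (\ref{em:deletion1}); \myinsert case (\ref{em:insertion2}) attaches a binary operator together with a fresh proposition $p$, and is undone exactly by \mydelete case (\ref{em:deletion2}), which removes that proposition sibling; and a \myrename of a proposition, a unary operator, or a binary operator is reversed by the opposite rename. Hence $\psi'$ is obtainable from $\psi$ by one edit operation if and only if $\psi$ is obtainable from $\psi'$ by one edit operation, so the one-step neighbor relation is symmetric and the reachability relation it generates is an equivalence relation.

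Next I would prove the reduction lemma: every formula $\psi$ can be transformed into a single propositional variable by finitely many edit operations, arguing by structural induction on $\psi$ (equivalently on $|\psi|$). If $\psi$ is an atom we are done, applying one \myrename if a specific variable is wanted. If $\psi = o_1\psi_1$, one \mydelete of type (\ref{em:deletion1}) yields $\psi_1$, to which the induction hypothesis applies. If $\psi = \psi_1\, o_2\, \psi_2$, I first invoke the induction hypothesis on the subformula $\psi_1$ to reduce it to a single proposition $p$, turning $\psi$ into $p\, o_2\, \psi_2$; now one child is a proposition, so \mydelete of type (\ref{em:deletion2}) removes it and leaves $\psi_2$, which the induction hypothesis reduces in turn, and a final \myrename fixes the resulting atom to $p_0$. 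Combining this with the symmetry established above, I reduce $\psi_1$ to $p_0$, reduce $\psi_2$ to $p_0$, and then reverse the second reduction to transform $p_0$ into $\psi_2$, yielding the chain $\psi_1 \to p_0 \to \psi_2$ in finitely many steps.

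The step I expect to be the main obstacle, and the one I would state explicitly, is justifying that the edit operations may act on \emph{any} subformula (any node of the parse tree), not merely at the root. This is precisely what the binary case of the reduction requires, namely reducing the operand $\psi_1$ \emph{in place} before deleting it. The interpretation is forced by the tree-edit-distance motivation, since the underlying node operations (relabel, insert, delete) act at arbitrary nodes; under it, each localized edit of a subtree is a legitimate edit operation on the whole formula, so the structural induction goes through with a finite total operation count and the theorem follows.
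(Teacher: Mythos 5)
Your proof is correct and follows essentially the same route as the paper's: reduce $\psi_1$ to a single proposition using \mydelete{} and \myrename{}, then rebuild $\psi_2$ from that proposition using \myinsert{} and \myrename{} (which is exactly your reversed reduction of $\psi_2$). You simply supply the details the paper leaves implicit, namely the structural induction for the reduction, the inverse pairing of \myinsert{} and \mydelete{}, and the observation that edits may be applied to arbitrary subformulae as licensed by the paper's neighborhood definition.
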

\begin{proof}
	We can change $\psi_1$ to a proposition by \mydelete~and \myrename. 
	Then we change the proposition to $\psi_2$ by \myinsert~and \myrename.
\end{proof}

Based on the formula edit operation, we design the neighborhood of an LTL formula.
\begin{definition}
	Given an LTL formula $\varphi$ as input, the neighborhood function w.r.t. $\varphi$, denoted by $N(\varphi)$, returns all formulae obtained
	by applying a formula editing operation to a sub-formulae of $\varphi$.
\end{definition}

Specifically, we first extract all subformulae of $\varphi$, then we modify each subformula with the above formula edit operations to get $N(\varphi)$.

\begin{property}\label{proerty:distance}
	For an LTL formula $\varphi$, if $\varphi' \in N(\varphi)$, then $\delta(\varphi', \varphi)=1$ or $2$.
\end{property}
\begin{proof}
	As $\phi'$ is a neighbor of $\phi$, $\varphi'$ is obtained from $\phi$ via a formula edit operation.
	Then we consider the formula edit operation in three cases:
	\begin{enumerate}
		\item In the case of \myrename~operation, \emph{i.e.}, (\ref{em:rename2})-(\ref{em:rename4}),  $T_{\varphi'}$ is obtained from $T_\varphi$ by relabeling the label of the corresponding node. So $\delta(\varphi', \varphi)=1$.
		\item In the case of unary \myinsert~operation (\ref{em:insertion1}) or unary \mydelete~operation (\ref{em:deletion1}),  $T_{\varphi'}$ is obtained by inserted/deleted the corresponding node. So $\delta(\varphi', \varphi)=1$.
		\item In the case of binary \myinsert~operation (\ref{em:insertion2}) or binary \mydelete~operation (\ref{em:deletion2}), $T_{\varphi'}$ is obtained from $T_\varphi$ by inserting or deleting two corresponding nodes. So $\delta(\varphi', \varphi)=2$.
	\end{enumerate}
\end{proof}
Property \ref{proerty:distance} shows that the structural similarity of formulae, as all the neighbors of a formula are formulae whose distance to it is at most $2$.

Since there are too many neighbors for $\varphi$ 
(such as the $\globally(h \wedge m)$ of Example \ref{examp:traincrossing}, the number of all its neighbors is $90$), to efficiently select a good candidate BC, we bound the number of neighbors to $k$ in $N(\varphi)$, denoted by $N_k(\varphi)$, where $k$ is a hyperparameter. 
Specifically, we first select $k$ sub-formulae of $\varphi$ uniformly at random. Then, for
each sub-formulae, select a formula edit operation uniformly at
random to modify it.

\begin{example}[Example \ref{examp:traincrossing} cont.]
\label{exp:formulaEditOpt}
Considering the formula $\globally(h\rightarrow \nexttime (p))$ and $k=3$, suppose we select the subformula $\psi$ of $\varphi$, \ie, $h$, $h \rightarrow \nexttime(p)$ and $\nexttime(p)$. Figure \ref{fig:formulaEditingOperations}(b), (c) and (d) show the result of \myrename~, \myinsert~and \mydelete~, respectively.
\end{example}

\begin{figure}[t]
	\centering
	\includegraphics[width=0.45\textwidth]{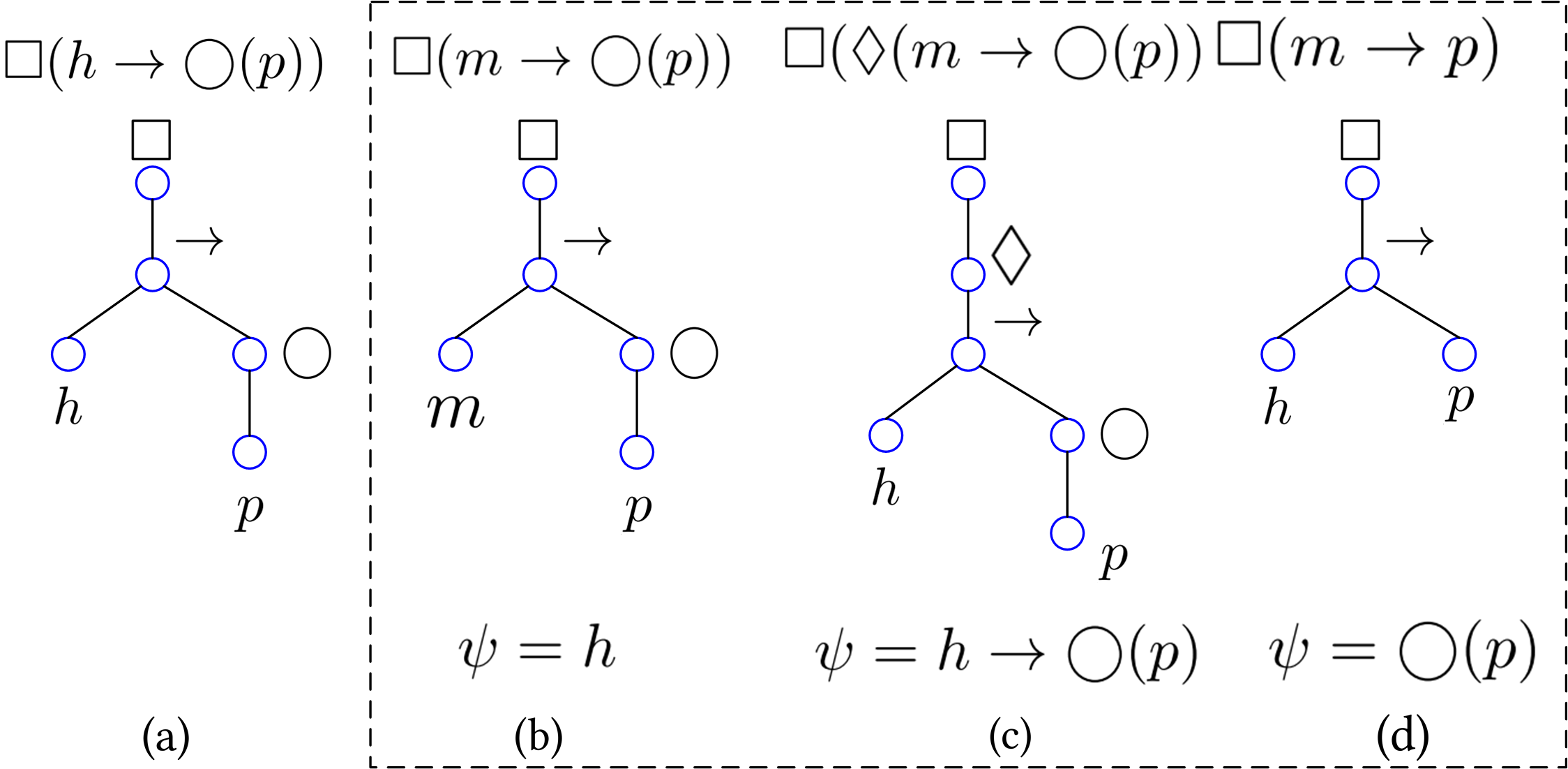}
	\caption{An example of getting neighbors of $\globally(h\rightarrow \nexttime (p))$ by formula editing operations. (a) $\globally(h \rightarrow \nexttime (p))$ and its parse tree. (b) The neighbor formula after renaming $h$ to $m$ and its parse tree. (c) The neighbor formula after inserting $\futrue$ and its parse tree. (d) The neighbor formula after deleting $\nexttime$ and its parse tree.}
	\label{fig:formulaEditingOperations}
\end{figure}

In order to avoid that some formulae are frequently visited, we employ the tabu search strategy~\cite{glover1986future,hansen1990algorithms}.
An LTL formula is called a $T$-tabu formula iff it has been visited during the last $T$ search steps, where $T$ is called tabu tenure and is a hyperparameter.
For an LTL formula $\varphi$, its neighborhood \wrt a $T$-tabu is defined as its neighbors which are not $T$-tabu formulae. formally,
$N_k(\varphi, T) = \{\beta \mid \beta \in N_k(\varphi)$ and $\beta$ is not a $T$-tabu formula$\}$.

\subsection{Initialization and Objective Function}\label{sec:initializationAndObjective}
In order to find BCs quickly, the initial formula should be as ``close'' to a BC as possible.
Therefore, we propose $\neg (G_1 \wedge \cdots \wedge G_n)$, called \textit{trivial condition}, as the initial formula.
It is easily constructed and ``close'' to a BC because it only violates non-triviality in Section \ref{sec:background}.

\begin{example}[Example \ref{examp:traincrossing} cont.]
	Consider the Example \ref{examp:traincrossing}, the initial formula is
	$\neg(\globally(h \rightarrow \nexttime(p)) \wedge \globally(m \rightarrow \nexttime(\neg p))$.
\end{example}

Next, we use an objective function to measure how ``close'' the formula is to a BC.
According to the three properties of BCs,
Degiovanni \emph{et al.}\cite{degiovanni2018genetic} proposed an objective function to capture the similarity between formula and BC.
Here we follow their objective function.
Given an LTL formula $\varphi$, the objective function $f$ is defined as:
$$
f(\varphi)=\li\left(\varphi\right)+\sum_{i=1}^{|G|} \min \left(\varphi, G_{-i}\right)+\nt\left(\varphi\right) + \frac{1}{|\varphi|}
$$
where $|G|$ is the number of the goals and the functions $li$, $min$ and $nt$ are defined as follows:

\begin{align*}
li(\varphi) &=
\left\{\begin{array}{ll}{1} & {\text { if } (Dom \wedge G \wedge \varphi)   \text{ is UNSAT} } \\
{0} & {\text { otherwise }}
\end{array}\right. \\
min\left(\varphi, G_{-i}\right)&=
\left\{\begin{array}{ll}{
	\frac{1}{|G|}} & {\text { if }(Dom \wedge G_{-i} \wedge \varphi) \text{ is SAT}} \\
{0} & {\text { otherwise }}\end{array}\right. \\
nt\left(\varphi\right)&=\left\{\begin{array}{ll}
{0.5} & {\text { if }   \neg G \not \equiv \varphi} \\
{0} & {\text { otherwise }
}\end{array}\right.
\end{align*}

Intuitively, the first three terms of the objective function $f$ capture the three properties of BC, respectively: \textit{li} for the logical inconsistency, \textit{min} for the minimality and \textit{nt} for the non-triviality.
The last term makes the local search tend to produce compact formulae, which is a secondary issue.
We use an LTL SAT checker, such as Aalta~\cite{li2015sat}, to check the satisfiability of LTL formula in the objective function.

\subsection{The Description of \LOGION}
\begin{algorithm}[t]
    \caption{The {\em \LOGION} Algorithm}
    \label{alg:LOGION}
    \KwIn{the domain properties $Dom$ and a set of goals $G_1, \dots, G_n$}
    \KwOut{a set of boundary conditions $\mathfrak{B}$} 
    $\mathfrak{B} \gets \emptyset$;\\
    $\varphi* \gets \neg(G_1 \wedge \dots \wedge G_n)$;\\ \label{alg:trivial-condition}
    \While{the cutoff time is not reached} {
        \If{$N(\varphi*, T) \not = \emptyset$}{
            $\mathfrak{\bar{B}} \gets$ $N_k(\varphi*, T)$; \\ \label{alg:bms1}
%            Computes the score of each formula in $\mathfrak{\bar{B}}$;\\ \label{alg:computePBC}
%            scores $\gets U_{b \in B}computeScores(b)$; \\ \label{alg:computePBC}
            $\varphi* \gets$ the formula in $\mathfrak{\bar{B}}$ with highest score; \\ \label{alg:bms2}
            $\mathfrak{B} \gets$ $\mathfrak{B}$ $\cup$ BCs in $\mathfrak{\bar{B}}$; \\ \label{alg:addBC}
        }
    	T-tabu update;
    }
    \Return $\mathfrak{B}$; \\ \label{alg:return}
\end{algorithm}

In this section, we give the entire description of the \LOGION algorithm based on the components described above.
\LOGION is shown in Algorithm \ref{alg:LOGION}.
In the beginning, \LOGION uses the trivial condition to construct an initial LTL formula $\varphi*$ (line \ref{alg:trivial-condition}).
After that, \LOGION executes a search step iteratively until the time budget is reached.
During the searching procedure, $\varphi*$ represents the current best formula. $N_k(\varphi*, T)$ is not always empty because T-tabu formulae update at the end of each iteration.
\LOGION computes the score of each formula in $\mathfrak{\bar{B}}$ according to the objective function mentioned in \ref{sec:initializationAndObjective}, and walks to the formula with the highest score (line \ref{alg:bms2}).
\LOGION adds all BCs from $\mathfrak{\bar{B}}$ into $\mathfrak{B}$ (line \ref{alg:addBC}).
Finally, it returns all BCs $\mathfrak{\bar{B}}$ (line \ref{alg:return}).

\section{Experiments}\label{sec:expriments}
In this section, we conduct extensive experiments on a broad range of cases shown in Section \ref{sec:similarity} to evaluate the effectiveness of our \LOGION algorithm
by comparing it with the state-of-the-art approaches.
We start by presenting two research questions and presenting the state-of-the-art competitors and experimental preliminaries about the experiments.
Then, we show the experimental results and give some discussions about the empirical results.
Finally, we show an application of BC.

As a start, we propose the following research questions to evaluate our \LOGION algorithm.

\textbf{RQ2:} To what extent does \LOGION exploit the structural similarity of BCs?

\textbf{RQ3:} How does \LOGION compare against the state-of-the-art competitors on cases?

\subsection{The State-of-the-art Competitors}
We compare \LOGION against two state-of-the-art approaches. One~\cite{degiovanni2018genetic} is based on a genetic algorithm, the other~\cite{degiovanni2016goal} is a Tableaux-based LTL satisfiability procedure. we call them as the GA-based approach and the Tableaux-based approach, respectively.

The GA-based approach~\cite{degiovanni2018genetic} is effective in finding multiple BCs in different forms since it applies suitable crossover and mutation operators.

The Tableaux-based approach~\cite{degiovanni2016goal} is a deterministic algorithm. It constructs paths that ``escape'' from the tableau structure and generates a small number of BCs.

\subsection{Experimental Settings}\label{sec:experimetalSettings}

\LOGION is implemented in Java, which uses the Java Metaheuristics Search Framework (JAMES)~\cite{de2017james,de2015james},
and integrating the LTL2B\"{u}chi library~\cite{giannakopoulou2002states} to parse LTL requirements specifications,
and the LTL satisfiability checker Aalta~\cite{li2015sat}.
In our experiments, the tabu tenure $T$ is set to $4$, $k$ is set to $50$.

We run the GA-based approach \footnote{http://dc.exa.unrc.edu.ar/staff/rdegiovanni/ASE2018.html} and \LOGION 10 times on each of the 16 cases.
In addition, we run Tableaux-based approach\footnote{https://dc.exa.unrc.edu.ar/staff/rdegiovanni/ase2016} only one time because it is a deterministic algorithm.
The cutoff time for our approach \LOGION and the GA-based approach is set to 1 hour, while the cutoff time for the Tableaux-based approach is set to 3 hours.

All the experiments were run on the 2.13GHz Intel E7-4830, with 128 GB memory under GNU/Linux (Ubuntu 16.04).

To compare the more general BCs among different algorithms,
we first computed a set of BCs $\Pi$ that fulfills the following three
requirements. (1) $\Pi$ is composed of BC obtained by all algorithms,
(2) $\not \exists \psi, \varphi \in \Pi\ s.t.\ \psi \rightarrow \varphi,$ and (3) for each BC identified by each
algorithm, it is either in $\Pi$ or less general than a BC in $\Pi$. 
Intuitively, $\Pi$ covers all the circumstances captured by all BCs identified by each algorithm. Then, we counted the number of BCs in $\Pi$ identified by each algorithm, denoted by ``\#gen.''. The larger ``\#gen.'' means that the BCs identified by the algorithm as a whole is more general than the ones identified by other algorithms.

In addition,
we use the method in \cite{degiovanni2018goal} to evaluate the likelihood of the BCs. In practice, the likelihood can be used to prioritize conflicts to be resolved, and suggest which goals to drive attention to for refinements. Specifically, we first computed the likelihood of the more general BCs. Then, based on the likelihood, we ranked BCs:
the BC with the larger likelihood ranks higher. We reported the
average ranking of the more general BCs with the highest likelihood of each competitor, denoted by ``rank''. 

\subsection{Experimental Results}
In this subsection, we present the experimental results and then answer the research questions mentioned above.

\textbf{Experiments on to what extent \LOGION exploits the structural similarity of BCs (RQ2):}
For each case study, we compute the average distances ($\delta(BC_{i}, BC_{i+1}) $) and normalized distance ($\Delta(BC_{i}, BC_{i+1}) $) between the average distance between boundary condition $BC_i$ and its next boundary condition $BC_{i+1}$ computed successively by our algorithm \LOGION.
The results are shown in Table \ref{tab:rq1-2}.
From the table, we can observe that in most cases (except LB, LC and AMBA), the $BC_{i}$ and $BC_{i+1}$ successively computed are quite similar. 
%and their distance is averagely less than $7$.
In other words, it is possible that once a BC is found, a new BC may be found after two to five neighbor jumps.
For LB, LC and AMBA cases, the normalized distances are less than $0.1$ but it needs more neighbor jumps to find the next BCs because the BCs have a bigger size.

The similarity between the pairs of BCs computed successively reflects that our algorithm \LOGION in fact exploits the structural similarity to find BCs.

\begin{table}
    \centering
    \caption{Experimental results on the similarity between boundary conditions successively computed by \LOGION}
    \label{tab:rq1-2}
% Table generated by Excel2LaTeX from sheet 'Sheet1'
\begin{tabular}{c|c|c}
	\hline
	\multirow{2}[2]{*}{Case} & \multicolumn{2}{c}{LOGION} \bigstrut[t]\\
	& $\delta(BC_{i}, BC_{i+1})$ & $\Delta(BC_{i}, BC_{i+1})$ \bigstrut[b]\\
	\hline
	RP1   & 6.43  & 0.18  \bigstrut[t]\\
	RP2   & 5.70  & 0.21  \\
	Ele   & 4.72  & 0.25  \\
	TCP   & 5.24  & 0.23  \\
	AAP   & 5.32  & 0.31  \\
	MP    & 4.69  & 0.21  \\
	ATM   & 5.79  & 0.22  \\
	RRCS  & 5.70  & 0.19  \\
	Tel   & 5.39  & 0.24  \\
	LAS   & 5.75  & 0.09  \\
	PA    & 4.36  & 0.09  \\
	RRA   & 5.36  & 0.14  \\
	SA    & 5.70  & 0.10  \\
	LB    & 10.43  & 0.05  \\
	LC    & 17.72  & 0.10  \\
	AMBA  & 24.67  & 0.09  \bigstrut[b]\\
	\hline
\end{tabular}%
\normalsize
\begin{flushleft}
	\small
    ``$\delta(BC_{i}, BC_{i+1})$'' stands for the average distance between boundary condition $BC_i$ and its next boundary condition $BC_{i+1}$ successively computed.
    ``$\Delta(BC_{i}, BC_{i+1})$'' means the average normalized distance between boundary condition $BC_i$ and its next boundary condition $BC_{i+1}$ successively computed.
\end{flushleft}
\end{table}

\textbf{Experiments on comparing \LOGION against its state-of-the-art competitors (RQ3):}
\begin{table*}[t]
    \centering
    \caption{Experimental results of \LOGION, the GA-based approach and the Tableaux-based approach (Tableaux for short)}
\begin{tabular}{p{2.94em}|p{1.94em}p{1.94em}p{1.94em}p{1.94em}|p{1.94em}p{1.94em}p{1.94em}p{1.94em}p{1.94em}r|p{1.94em}p{1.94em}p{1.94em}p{1.94em}p{1.94em}r}
	\hline
	\multirow{2}[2]{*}{Case} & \multicolumn{3}{c}{Tableaux} & \multicolumn{1}{c|}{} & \multicolumn{6}{c|}{GA-based Approach} & \multicolumn{6}{c}{\LOGION} \bigstrut[t]\\
	\multicolumn{1}{c|}{} & \#BC & \#gen. & rank & Time & \#BC & \#gen. & rank & $T_{FBC}$ & $S_{BBC}$ & \multicolumn{1}{p{2.125em}|}{\#suc.} & \#BC & \#gen. & rank & $T_{FBC}$ & $S_{BBC}$ & \multicolumn{1}{p{2.125em}}{\#suc.} \bigstrut[b]\\
	\hline
	RP1 & \multicolumn{1}{r}{1.0 } & \multicolumn{1}{r}{0.9 } & \multicolumn{1}{r}{2.9 } & \multicolumn{1}{r|}{0.1 } & \multicolumn{1}{r}{45.2 } & \multicolumn{1}{r}{2.8 } & \multicolumn{1}{r}{\textbf{1.4 }} & \multicolumn{1}{r}{8.8 } & \multicolumn{1}{r}{14.2 } & \textbf{10} & \multicolumn{1}{r}{\textbf{6935.5 }} & \multicolumn{1}{r}{\textbf{4.0 }} & \multicolumn{1}{r}{1.5 } & \multicolumn{1}{r}{\textbf{3.5 }} & \multicolumn{1}{r}{\textbf{8.8 }} & \textbf{10 } \bigstrut[t]\\
	RP2 & \multicolumn{1}{r}{1.0 } & \multicolumn{1}{r}{0.7 } & \multicolumn{1}{r}{1.6 } & \multicolumn{1}{r|}{0.4 } & \multicolumn{1}{r}{48.0 } & \multicolumn{1}{r}{3.0 } & \multicolumn{1}{r}{1.4 } & \multicolumn{1}{r}{18.7 } & \multicolumn{1}{r}{16.8 } & \textbf{10} & \multicolumn{1}{r}{\textbf{14158.0 }} & \multicolumn{1}{r}{\textbf{7.0 }} & \multicolumn{1}{r}{\textbf{1.3 }} & \multicolumn{1}{r}{\textbf{1.4 }} & \multicolumn{1}{r}{\textbf{7.6 }} & \textbf{10 } \\
	Ele & \multicolumn{1}{r}{- }& \multicolumn{1}{r}{- }& \multicolumn{1}{r}{- }& \multicolumn{1}{r|}{- }& \multicolumn{1}{r}{41.2 } & \multicolumn{1}{r}{2.1 } & \multicolumn{1}{r}{\textbf{1.5 }} & \multicolumn{1}{r}{4.1 } & \multicolumn{1}{r}{\textbf{5.0 }} & \textbf{10} & \multicolumn{1}{r}{\textbf{7759.4 }} & \multicolumn{1}{r}{\textbf{22.2 }} & \multicolumn{1}{r}{\textbf{1.5 }} & \multicolumn{1}{r}{\textbf{1.8 }} & \multicolumn{1}{r}{\textbf{5.0 }} & \textbf{10 } \\
	TCP & \multicolumn{1}{r}{2.0 } & \multicolumn{1}{r}{0.6 } & \multicolumn{1}{r}{2.2 } & \multicolumn{1}{r|}{1.0 } & \multicolumn{1}{r}{80.4 } & \multicolumn{1}{r}{0.4 } & \multicolumn{1}{r}{2.0 } & \multicolumn{1}{r}{17.2 } & \multicolumn{1}{r}{\textbf{5.1 }} & \textbf{10} & \multicolumn{1}{r}{\textbf{10335.4 }} & \multicolumn{1}{r}{\textbf{20.0 }} & \multicolumn{1}{r}{\textbf{1.0 }} & \multicolumn{1}{r}{\textbf{0.9 }} & \multicolumn{1}{r}{7.3 } & \textbf{10 } \\
	AAP & \multicolumn{1}{r}{4.0 } & \multicolumn{1}{r}{4.0 } & \multicolumn{1}{r}{2.3 } & \multicolumn{1}{r|}{0.3 } & \multicolumn{1}{r}{56.8 } & \multicolumn{1}{r}{1.1 } & \multicolumn{1}{r}{2.2 } & \multicolumn{1}{r}{25.1 } & \multicolumn{1}{r}{\textbf{3.0 }} & \textbf{10} & \multicolumn{1}{r}{\textbf{22512.9 }} & \multicolumn{1}{r}{\textbf{41.9 }} & \multicolumn{1}{r}{\textbf{1.1 }} & \multicolumn{1}{r}{\textbf{4.4 }} & \multicolumn{1}{r}{\textbf{3.0 }} & \textbf{10 } \\
	MP & \multicolumn{1}{r}{- }& \multicolumn{1}{r}{- }& \multicolumn{1}{r}{- }& \multicolumn{1}{r|}{- }& \multicolumn{1}{r}{75.2 } & \multicolumn{1}{r}{3.0 } & \multicolumn{1}{r}{1.9 } & \multicolumn{1}{r}{10.0 } & \multicolumn{1}{r}{\textbf{3.0 }} & \textbf{10} & \multicolumn{1}{r}{\textbf{27112.2 }} & \multicolumn{1}{r}{\textbf{26.9 }} & \multicolumn{1}{r}{\textbf{1.1 }} & \multicolumn{1}{r}{\textbf{0.3 }} & \multicolumn{1}{r}{\textbf{3.0 }} & \textbf{10 } \\
	ATM & \multicolumn{1}{r}{3.0 } & \multicolumn{1}{r}{2.0 } & \multicolumn{1}{r}{2.4 } & \multicolumn{1}{r|}{1.9 } & \multicolumn{1}{r}{98.6 } & \multicolumn{1}{r}{1.4 } & \multicolumn{1}{r}{1.9 } & \multicolumn{1}{r}{12.0 } & \multicolumn{1}{r}{\textbf{6.0 }} & \textbf{10} & \multicolumn{1}{r}{\textbf{8773.6 }} & \multicolumn{1}{r}{\textbf{88.3 }} & \multicolumn{1}{r}{\textbf{1.4 }} & \multicolumn{1}{r}{\textbf{0.6 }} & \multicolumn{1}{r}{12.4 } & \textbf{10 } \\
	RRCS & \multicolumn{1}{r}{- }& \multicolumn{1}{r}{- }& \multicolumn{1}{r}{- }& \multicolumn{1}{r|}{- }& \multicolumn{1}{r}{60.4 } & \multicolumn{1}{r}{4.2 } & \multicolumn{1}{r}{1.6 } & \multicolumn{1}{r}{7.5 } & \multicolumn{1}{r}{11.0 } & \textbf{10} & \multicolumn{1}{r}{\textbf{17912.0 }} & \multicolumn{1}{r}{\textbf{113.4 }} & \multicolumn{1}{r}{\textbf{1.2 }} & \multicolumn{1}{r}{\textbf{1.1 }} & \multicolumn{1}{r}{\textbf{6.4 }} & \textbf{10 } \\
	Tel & \multicolumn{1}{r}{- }& \multicolumn{1}{r}{- }& \multicolumn{1}{r}{- }& \multicolumn{1}{r|}{- }& \multicolumn{1}{r}{155.0 } & \multicolumn{1}{r}{4.2 } & \multicolumn{1}{r}{2.0 } & \multicolumn{1}{r}{61.3 } & \multicolumn{1}{r}{12.7 } & 9 & \multicolumn{1}{r}{\textbf{12408.2 }} & \multicolumn{1}{r}{\textbf{89.1 }} & \multicolumn{1}{r}{\textbf{1.0 }} & \multicolumn{1}{r}{\textbf{1.0 }} & \multicolumn{1}{r}{\textbf{6.6 }} & \textbf{10 } \\
	LAS & \multicolumn{1}{r}{- }& \multicolumn{1}{r}{- }& \multicolumn{1}{r}{- }& \multicolumn{1}{r|}{- }& \multicolumn{1}{r}{36.8 } & \multicolumn{1}{r}{0.9 } & \multicolumn{1}{r}{\textbf{1.2 }} & \multicolumn{1}{r}{913.3 } & \multicolumn{1}{r}{43.0 } & 6 & \multicolumn{1}{r}{\textbf{12125.4 }} & \multicolumn{1}{r}{\textbf{133.3 }} & \multicolumn{1}{r}{1.3 } & \multicolumn{1}{r}{\textbf{1.2 }} & \multicolumn{1}{r}{\textbf{20.6 }} & \textbf{10 } \\
	PA & \multicolumn{1}{r}{- }& \multicolumn{1}{r}{- }& \multicolumn{1}{r}{- }& \multicolumn{1}{r|}{- }& \multicolumn{1}{r}{- }& \multicolumn{1}{r}{- }& \multicolumn{1}{r}{- }& \multicolumn{1}{r}{- }& \multicolumn{1}{r}{- }& 0 & \multicolumn{1}{r}{\textbf{3220.6 }} & \multicolumn{1}{r}{\textbf{45.9 }} & \multicolumn{1}{r}{\textbf{1.0 }} & \multicolumn{1}{r}{\textbf{2.4 }} & \multicolumn{1}{r}{\textbf{15.4 }} & \textbf{10 } \\
	RRA & \multicolumn{1}{r}{- }& \multicolumn{1}{r}{- }& \multicolumn{1}{r}{- }& \multicolumn{1}{r|}{- }& \multicolumn{1}{r}{88.9 } & \multicolumn{1}{r}{0.4 } & \multicolumn{1}{r}{2.0 } & \multicolumn{1}{r}{470.1 } & \multicolumn{1}{r}{12.6 } & \textbf{10} & \multicolumn{1}{r}{\textbf{1966.5 }} & \multicolumn{1}{r}{\textbf{80.2 }} & \multicolumn{1}{r}{\textbf{1.0 }} & \multicolumn{1}{r}{\textbf{1.2 }} & \multicolumn{1}{r}{\textbf{10.9 }} & \textbf{10 } \\
	SA & \multicolumn{1}{r}{- }& \multicolumn{1}{r}{- }& \multicolumn{1}{r}{- }& \multicolumn{1}{r|}{- }& \multicolumn{1}{r}{- }& \multicolumn{1}{r}{- }& \multicolumn{1}{r}{- }& \multicolumn{1}{r}{- }& \multicolumn{1}{r}{- }& 0 & \multicolumn{1}{r}{\textbf{3600.8 }} & \multicolumn{1}{r}{\textbf{42.9 }} & \multicolumn{1}{r}{\textbf{1.0 }} & \multicolumn{1}{r}{\textbf{11.3 }} & \multicolumn{1}{r}{\textbf{20.7 }} & \textbf{10 } \\
	LB & \multicolumn{1}{r}{- }& \multicolumn{1}{r}{- }& \multicolumn{1}{r}{- }& \multicolumn{1}{r|}{- }& \multicolumn{1}{r}{- }& \multicolumn{1}{r}{- }& \multicolumn{1}{r}{- }& \multicolumn{1}{r}{- }& \multicolumn{1}{r}{- }& 0 & \multicolumn{1}{r}{\textbf{87.8 }} & \multicolumn{1}{r}{\textbf{2.8 }} & \multicolumn{1}{r}{\textbf{1.0 }} & \multicolumn{1}{r}{\textbf{1489.2 }} & \multicolumn{1}{r}{\textbf{62.5 }} & \textbf{4 } \\
	LC & \multicolumn{1}{r}{- }& \multicolumn{1}{r}{- }& \multicolumn{1}{r}{- }& \multicolumn{1}{r|}{- }& \multicolumn{1}{r}{- }& \multicolumn{1}{r}{- }& \multicolumn{1}{r}{- }& \multicolumn{1}{r}{- }& \multicolumn{1}{r}{- }& 0 & \multicolumn{1}{r}{\textbf{16.6 }} & \multicolumn{1}{r}{\textbf{1.7 }} & \multicolumn{1}{r}{\textbf{1.0 }} & \multicolumn{1}{r}{\textbf{389.3 }} & \multicolumn{1}{r}{\textbf{91.6 }} & \textbf{7 } \\
	AMBA & \multicolumn{1}{r}{- }& \multicolumn{1}{r}{- }& \multicolumn{1}{r}{- }& \multicolumn{1}{r|}{- }& \multicolumn{1}{r}{- }& \multicolumn{1}{r}{- }& \multicolumn{1}{r}{- }& \multicolumn{1}{r}{- }& \multicolumn{1}{r}{- }& 0 & \multicolumn{1}{r}{- }& \multicolumn{1}{r}{- }& \multicolumn{1}{r}{- }& \multicolumn{1}{r}{- }& \multicolumn{1}{r}{- }& 0  \bigstrut[b]\\
	\hline
\end{tabular}%
\label{tab:rq2-1}%
\normalsize
\begin{flushleft}
	\small
	``\#BC'' stands for the average number of BCs in 10 runs. 
	The definitions of ``\#gen.'' and ``rank'' are shown in Section  \ref{sec:experimetalSettings}.
	``$T_{FBC}$'' means the time (second) of identifying the first BC. ``$S_{BBC}$'' means the size of the best BC (the most compact BC).
	``\#suc.'' is the number of successful runs (out of 10 runs).
	Finally, ``-'' means the failed case.
\end{flushleft}
\end{table*}%
The results are shown in Table \ref{tab:rq2-1} present that \LOGION performs substantially better on all cases than Tableaux-based approach and GA-based approach.
Firstly, \LOGION can handle more cases than others in one hour.
Secondly, comparing the BCs found (``\#BCs''), \LOGION outperforms the GA-based approach and the Tableaux-based approach by one order of magnitude in almost all cases.
Thirdly, \LOGION generally have higher ``\#gen.'' and ``rank'' than others.
It indicates that the local search algorithm exhibits a faster search process to search for BCs than the tableaux and genetic algorithm. \LOGION potentially identifies more general BCs based on a large among of BCs.
This shows that \LOGION not only finds more BCs but also more general BCs.
Fourthly, for the first 15 cases, we can observe that the success rate of \LOGION is higher than that of the two approaches, which demonstrates that our approach \LOGION is more robust.

Besides, we reported the time of identifying the first BC (``$T_{FBC}$'') and the size of the best BC (the most compact BC ``$S_{BBC}$''). \LOGION not only identifies first BCs fast but also performs better search bias towards compact formulae.

We notice that none of the approaches identified BCs on the AMBA. We think that the reason is as follows. 
For the Tableaux-based approach, it cannot generate the tableau structure of AMBA within 3 hours.
Actually, it can only generate tableau structure of RP1, RP2, TCP, AAP, and ATM within 3 hours.
For the GA-based approach and \LOGION, to verify whether a formula is BC in AMBA,
they need to call the LTL satisfiability checker 24 times (logical inconsistency 1 time, minimality 21 times, and non-trivial 2 times). 
Each time the size of the formula checked by the LTL satisfiability checker is large.
Therefore, both approaches can only search for a small number of formulae within an hour on AMBA, resulting in not finding BCs.

In short, \LOGION significantly outperforms the state-of-the-art approaches, especially in computing BCs and general BCs.

\subsection{Application}
The direct application of BC is used to repair requirements specifications. 
One of the common strategies for resolving divergence is goal weakening. Its principle is to weaken the formulation of one or several among the divergent goals so as to make divergence disappear.
For Example \ref{examp:traincrossing}, Emmanuel Letier et al~\cite{letier2001reasoning} resolved divergence by weakening the first goal to cover $BC_2$ ($\future(h \wedge m)$). 
The first goal after the change is ``the pump is switched on when the water level is high and there is no methane''. Formally, $\globally((h \wedge \neg m) \rightarrow \nexttime(p))$.

BCs are also used to explain the synthesis unrealizability. 
For Example \ref{examp:traincrossing}, we ask some synthesis tools, like Ratsy \cite{bloem2010ratsy}, to build a controller that satisfies the specified goals. We will get as an answer that the specification is unrealizable. Recall that two BCs are presented in example 1. These formulae give us information about some admissible behaviors of the system, that lead us to violate the goals. It means that the environment always has a winning strategy to make the controller reach the BC. Such a BC could be thought of as an explanation of why the controller cannot satisfy all goals.
\section{Related Work}\label{sec:relatedwork}
Besides the inconsistency management approaches based on the informal or semi-formal way, such as \cite{hausmann2002detection,herzig2014conceptual,kamalrudin2009automated,kamalrudin2011improving},
a series of formal approaches \cite{ellen2014detecting,ernst2012agile,harel2005synthesis,nguyen2014kbre} recently have been proposed, which only focus on logical inconsistency or ontology mismatch.
Another related approach is Nuseibeh and Russo's work \cite{nuseibeh1999using}, which generates a conjunction of ground literals as an explanation for the unsatisfiable specification based on abduction reasoning.
For consistency checking methods, we have to mention the approach of Harel \emph{et al.} \cite{harel2005synthesis}, which identifies inconsistencies between two requirements represented as conditional scenarios.
While in this paper, we are interested in identifying the situations that lead to goal divergences, which are nothing but weak inconsistencies.
The works on reasoning about conflicts in requirements also include \cite{jureta2010techne, liu2010ontology,mairiza2010constructing}.

For the assessment of conflicts, Degiovanni \emph{et al.}~\cite{degiovanni2018goal} recently have proposed an automated approach to assess how likely conflict is, under an assumption that all events are equally likely.
For the resolution of conflicts, Murukannaiah \emph{et al.}~\cite{murukannaiah2015resolving} resolved the conflicts among stakeholder goals of system-to-be based on the Analysis of Competing Hypotheses technique and argumentation patterns. Related works on conflict resolution also include \cite{felfernig2009plausible} which calculates the personalized repairs for the conflicts of requirements with the principle of Model-based Diagnosis.
However, these approaches presuppose that the conflicts have been already identified and our approach for BC discovery provides a footstone for solving these problems.

In goal-oriented requirements engineering, we have to mention the work on obstacle analysis.
An obstacle, first proposed in \cite{van2000handling}, is a particular goal conflict, which captures the situation that only one goal is inconsistent with the domain properties.
Alrajeh \emph{et al.} \cite{alrajeh2012generating} exploited model checking technique to generate tracks that violate or satisfy the goals, and then compute obstacles from these tracks based on the machine learning technique. Other approaches of obstacle analysis include \cite{cailliau2012probabilistic,cailliau2014integrating,cailliau2015handling,van2000handling}.
Whereas, as obstacles only capture the inconsistency for single goals, these approaches fail to deal with the situations where multiple goals are conflicting.

Let us come back to the goal-conflict identification problem.
The concept of goal conflict was first proposed by van Lamsweerde \emph{et al.} \cite{van1998managing}, who also proposed a pattern-based approach to identify a goal conflict in a requirement specification. But the syntactical restrictions on the goal specifications and the ability of computing only one BC indeed limit the applicability of the approach.
While our approach \LOGION has no limitation on the specifications and is able to generate BCs in any form theoretically.

In 2016, Degiovanni \emph{et al.}~\cite{degiovanni2016goal} paid attention on goal-conflict identification again.
They provided a tableaux-based approach to generate BCs, consisting of two phases. It first generates a conjunction of domain properties and goals ($Dom \wedge G$) via a tableau, 
then identifies BCs with a complex logical algorithm based on tableaux.
However, for the specifications with a large number of domain properties and goals, the approach suffers from an efficiencies issue because tableaux are difficult to be generated.
It limits the approach to be applicable only on small specifications.
As shown in the last section, our approach \LOGION, as an anytime algorithm, generates significantly more BCs and solves more cases within the same time interval.

Another related work lies in Degiovanni \emph{et al.}\cite{degiovanni2018genetic} which applies a genetic algorithm to BCs. Based on the definition of BCs, they defined a fitness function to guide towards finding compact BCs.
However, our approach is more efficient than the GA-based approach and generates significantly more BCs.
Because the neighborhood relation of formulae captures the structural similarity, our approach finds another BCs within a few iterations once a BC is found.

\section{Conclusion and Future Work}\label{sec:conclusion}
In this paper,
we discover a frequent phenomenon that some BCs are similar on the formula structure and give a formal analysis \wrt the formula distance.
Based on such an observation,
we present
an efficient local search algorithm \LOGION, to automatically identify BCs, which is featured as capturing the similarity in formula structure of BCs.
By taking experiments on the classical cases, we show that our approach \LOGION is more efficient than the state-of-the-art approaches of computing BCs and general BCs.
In future work, we hope to optimize the BC verification procedure by reducing the calls of the LTL satisfiability checker.%, because the LTL checking process is PSPACE-complete. 
%Furthermore, we will improve \LOGION to search for more general BCs.

\section*{Acknowledgment}
We thank Shaowei Cai for the discussion on the paper; Xiaotong Song for her help on the experiments.
This paper was supported by the Guangdong Province Science and Technology Plan projects (No. 2017B010110011 and  2016B030305007), National Natural Science Foundation of China (No. 61976232; 61573386; 61906216), National Key R\&D Program of China (No. 2018YFC0830600),\\Guangdong Province Natural Science Foundation (No. 2016A030313292, 2017A070706010 (soft science), and 2018A030313086), Guangdong Basic and Applied Basic Research Foundation (No. 2020A1515010642), Guangzhou Science and Technology Project (No. 201804010435), and the Fundamental Research Funds for the Central Universities (19lgpy226).

\bibliographystyle{IEEEtran}
\bibliography{sample-base}

\end{document}